\newcommand{\bra}[1]{\langle#1|}
\newcommand{\ket}[1]{|#1\rangle}
\newcommand{\Tr}{\operatorname{Tr}}
\newcommand{\hilb}[1]{\mathcal{#1}}
\newcommand{\set}[1]{\mathcal{#1}}
\newcommand{\supp}{\operatorname{supp}}
\newtheorem{dfn}{Definition}
\newtheorem{lmm}{Lemma}
\newtheorem{thm}{Theorem}
\newtheorem{cor}{Corollary}
\begin{document}

\title{Tight Bounds on Accessible Information and Informational Power}

\author{Michele \surname{Dall'Arno}}

\affiliation{Graduate School of Information Science, Nagoya
  University, Chikusa-ku, Nagoya, 464-8601, Japan}

\author{Francesco \surname{Buscemi}}

\affiliation{Graduate School of Information Science, Nagoya
  University, Chikusa-ku, Nagoya, 464-8601, Japan}

\affiliation{Institute for Advanced Research, Nagoya
  University, Chikusa-ku, Nagoya 464-8601, Japan}

\author{Masanao \surname{Ozawa}}

\affiliation{Graduate School of Information Science, Nagoya
  University, Chikusa-ku, Nagoya, 464-8601, Japan}

\date{\today}

\begin{abstract}
  The accessible information quantifies the amount of
  classical information that can be extracted from an
  ensemble of quantum states. Analogously, the informational
  power quantifies the amount of classical information that
  can be extracted by a quantum measurement. For both
  quantities, we provide upper and lower bounds that depend
  only on the dimension of the system, and we prove their
  tightness. In the case of symmetric informationally
  complete (SIC) ensembles and measurements, stronger bounds
  are provided and their tightness proved for qubits and
  qutrits. From our upper bounds, we notice, perhaps
  surprisingly, that the statistics generated by SIC
  ensembles or measurements in arbitrary dimension, though
  optimal for tomographic purposes, in fact never contain
  more than just one bit of information, the rest being
  constituted by completely random bits. On the other hand,
  from our lower bounds, we obtain an explicit strategy
  beating the so-called ``pretty-good'' one for the
  extraction of mutual information in the case of SIC
  ensembles and measurements.
\end{abstract}

\maketitle

\section{Introduction}

Quantum theory poses bounds on the amount of classical
information that can be extracted from an ensemble of
quantum states or by a quantum measurement, and the problem
of quantifying such bounds lies at the heart of many
important aspects of the theory. From a fundamental
viewpoint, it has implications in fields such as quantum
measurement theory~\cite{Oza13}, entropic uncertainty
relations~\cite{BW07, BR10, WW10, CCYZ12, BHOW13}, or the
informational axiomatization of quantum
theory~\cite{CDP11,CDP12}. On the other hand, in
applications it is often important to implement preparations
and measurements that achieve those bounds in order to
optimally perform tasks such as information
locking~\cite{WW10}, private quantum
decoupling~\cite{BGK09,Bus09}, purification of noisy quantum
measurements~\cite{DDS10}, randomness
generation~\cite{NC00}, classical communication over quantum
channels~\cite{NC00}, storage and retrieval of information
from quantum memories~\cite{NC00}, and quantum error
correction~\cite{Bus08,CDDMP10}.

The problem of quantifying how much information can be
extracted from an ensemble of quantum states was first
addressed almost half a century ago, and is usually referred
to as the accessible information problem~\cite{LL66, Hol73,
  Bel75a, Bel75b, Dav78, Oza84, Oza86, YO93, JRW94}. Since
exact solutions to this problem are known only for some
restricted classes of symmetric preparations, tight upper
and lower bounds - known as the Holevo-Yuen-Ozawa
bound~\cite{Hol73, YO93} and the subentropy
bound~\cite{JRW94}, respectively - can often be extremely
useful. We note however that, even for highly symmetric
preparations like the so-called {\em tetrahedral
  preparation} first introduced by Davies~\cite{Dav78}, the
analytic evaluation of the accessible information remained
an open problem for long time~\cite{DDS11}.

Very recently, a ``dual'' problem to that of accessible
information has been introduced, as the problem of
quantifying how much information can be extracted by a
quantum measurement (i.e. its {\em informational
  power})~\cite{DDS11}. Despite the usefulness of such dual
approach~\cite{DDS11, OCMB11, Hol12, Hol13, SS14, Szy14}
(leading e.g. to the analytic evaluation~\cite{DDS11,
  OCMB11, SS14} of the accessible information of the
tetrahedral preparation), tight upper and lower bounds on
the informational power of a quantum measurement are still
lacking.

The tetrahedral configuration for qubits is generalized to
higher dimensions by what are called {\em symmetric
  informationally complete} (SIC) preparations and
measurements~\cite{Cav99}. SIC quantum
measurements~\cite{Cav99, RBSC04, AFF11} were introduced as
useful tools for quantum tomography, quantum
cryptography~\cite{FS03, REK04, Ren05, ZB05, Bal05, Kim07},
and classical signal processing~\cite{Sco06}, but it has
been argued that they also play a fundamental role in
characterizing the structure of quantum state
space~\cite{AFF11}. Despite numerous efforts, many important
properties of SIC measurements, not last their existence in
any dimension, are still unanswered problems. Therefore, any
approach that can shed new light on SIC measurements may
lead to a better understanding of their structure.

The purpose of this work is twofold. First, we provide tight
upper and lower bounds, depending only on the dimension of
the system, on the accessible information and the
informational power of any preparation and
measurement. While both upper and lower bounds for
accessible information can be trivially saturated, we find
that, for the case of informational power, the lower bound
is saturated by a non-trivial measurement that we name
``Scrooge'' following Ref.~\cite{JRW94}. Second, we focus on
the case of SIC preparations and measurements, providing
stronger upper and lower bounds on accessible information
and informational power, and proving their tightness in the
case of qubits and qutrits. As a corollary of our upper
bound, we show that the accessible information and the
informational power of SIC preparations and measurements are
always upper bounded by one bit. In this sense, therefore,
the statistics generated by SIC preparations or SIC
measurements, though optimal for tomographic purposes, are
(perhaps surprisingly) constituted by almost completely
random bits. As a corollary of our lower bound, we provide a
strategy for the extraction of mutual information from SIC
ensembles or by SIC measurements, which beats the so-called
``pretty-good'' one~\cite{HW94, Hal97, Bus07, BH09}.

The paper is organized as follows. In
Sec.~\ref{sec:povmbound} we provide our first main result,
namely tight upper and lower bounds on the accessible
information and informational power for arbitrary
preparations and measurements, as a function of the
dimension only. In Sec.~\ref{sec:sicbound} we provide our
second main result, namely upper and lower bounds on the
accessible information and informational power for SIC
preparations and measurements. For qubits and qutrits we prove
tightness in Sec.~\ref{sec:sicqubit}. We conclude the paper
discussing some open problems in Sec.~\ref{sec:conclusion}.

\section{Bounds for arbitrary ensembles and measurements}
\label{sec:povmbound}

In this Section we provide upper and lower bounds on the
accessible information (resp., informational power) of
arbitrary preparations (resp., measurements).

Let us recall some basic definitions~\cite{Cov06} from
classical information theory. A random variable $X$ is a
function that maps from its domain, the sample space, to its
range, the real numbers, according to a probability
distribution (probability density for the continuous case)
$p(X{=}x)$. Given a random variable $X$, the Shannon entropy
$H(X)$ defined as
\begin{align*}
  H(X) := - \sum_x p(X{=}x) \log p(X{=}x),
\end{align*}
for the discrete case, or
\begin{align*}
  H(X) := - \int_{-\infty}^{\infty} dx \ p(X{=}x) \log
  p(X{=}x),
\end{align*}
for the continuous case, is a measure of the lack
of information about the outcome of $X$. We write $\log$ for
binary logarithms and $\ln$ for natural logarithms, and we
express informational quantities in bits. Given two random
variables $X$ and $Y$, the joint Shannon entropy $H(X,Y)$ is
defined as
\begin{align*}
  H(X,Y) := - \sum_{x,y} p(X{=}x, Y{=}y) \log p(X{=} x,
  Y{=}y),
\end{align*}
for the discrete case, or
\begin{align*}
  & H(X,Y) := \\ & - \iint_{-\infty}^{\infty} dx dy \ p(X{=}x,
  Y{=}y) \log p(X{=}x, Y{=}y)
\end{align*}
for the continuous case, and is a measure of the lack of
information about the joint outcomes of $X$ and $Y$. The
conditional Shannon entropy $H(X|Y)$ defined as $H(Y|X) :=
H(X,Y) - H(X)$ is a measure of the lack of information about
the outcome of $X$ given the knowledge of the outcome of
$Y$. The {\em mutual information} $I(X;Y)$ defined as
$I(X;Y) := H(X) + H(Y) - H(X,Y) = H(Y) - H(Y|X) = H(X) -
H(X|Y)$ is a measure of correlation between $X$ and $Y$, or,
equivalently, a measure of how much information about each
random variable is carried by the other one.

Let us recall some basic definitions~\cite{NC00} from
quantum information theory. Any quantum system is associated
to an Hilbert space $\hilb{H}$, and we denote with
$L(\hilb{H})$ the space of linear operators on
$\hilb{H}$. We will consider only finite dimensional Hilbert
spaces. A {\em state} $\rho$ is a positive semidefinite
operator in $L(\hilb{H})$ such that $\Tr[\rho] \le 1$. Any
preparation of a quantum system is described by an {\em
  ensemble}. In the discrete (resp., continuous) case, an
ensemble is an operator valued measurable function $\set{E}
= \{ \rho_x \}$ from real numbers $x$ to states $\rho_x \in
L(\hilb{H})$, associated with a probability distribution
(resp., probability density), such that $\sum_x\Tr[\rho_x] =
1$ (resp., $\int_x dx \Tr[\rho_x] = 1$). We will call any
ensemble of rank-one states a {\em rank-one ensemble}. For
any Hilbert space $\hilb{H}$, the continuous ensemble
uniformly distributed according to Haar measure~\cite{Ste95,
  Syk74} is sometimes called {\em Scrooge
  ensemble}~\cite{JRW94}.

An {\em effect} $\Pi$ is a positive semidefinite operator in
$L(\hilb{H})$ such that $\Pi \le \openone$. Any measurement
on a quantum system is described by a {\em probability
  operator-valued measure} (POVM)~\cite{Hol73}. In the
discrete (resp., continuous) case, a POVM is an operator
valued measurable function $\set{P} = \{ \Pi_y \}$ from real
numbers $y$ to effects $\Pi_y \in L(\hilb{H})$, such that
$\sum_y \Pi_y = \openone$ (resp., $\int_y dy \Pi_y =
\openone$), where $\openone$ denotes the identity operator.
We will call any POVM of rank-one effects a {\em rank-one
  POVM}. In analogy with~\cite{JRW94}, for any Hilbert space
$\hilb{H}$, we call the continuous POVM uniformly
distributed according to Haar measure~\cite{Ste95, Syk74}
the {\em Scrooge POVM}.

Given an ensemble $\set{E} = \{ \rho_x \}$ and a POVM
$\set{P} = \{ \Pi_y \}$, the joint probability $P(X{=}x,
Y{=}y)$ of state $\rho_x$ and outcome $\Pi_y$ is given by
the Born rule, namely $P(X{=}x, Y{=}y) = \Tr[\rho_x \Pi_y]$,
and the mutual information $I(X; Y)$ is usually denoted with
$I(\set{E}, \set{P})$. Given an Hilbert space $\hilb{K}$ and
an ensemble $\set{E} = \{ \rho_x \}$ on $\hilb{K}$ with
average state $\rho := \sum_x \rho_x$, or $\rho := \int_x dx
\rho_x$ in the continuous case, call $P$ the orthogonal
projector on $\hilb{H} := \supp\rho$. Then for any POVM
$\set{P} = \{ \Pi_y \}$ on $\hilb{K}$ one has that
\begin{align*}
  \Tr[\rho_x \Pi_y] = \Tr[\rho_x P \Pi_y P], \qquad \forall
  x,y,
\end{align*}
since $\rho_x = P \rho_x P$ for any $x$. Since $\int_y dy P
\Pi_y P = P$ (i. e. $\sum_y P \Pi_y P = P$ in the discrete
case) and $P \Pi_y P \ge 0$ for any $y$, one has that $\{ P
\Pi_y P\}$ is a POVM on $\hilb{H}$ with same joint
probability. Therefore we can limit ourselves, without loss
of generality, to ensembles $\set{E} = \{ \rho_x \}$ such
that the average state $\rho$ is {\em invertible} (or {\em
  faithful}~\cite{Oza85}), namely to be defined on
$\hilb{H}$ with $\hilb{H} := \supp\rho$. Given an ensemble
$\set{E} = \{ \rho_x \}$ with average state $\rho$, then
$\{\rho^{-1/2} \rho_x \rho^{-1/2}\}$ is a POVM, sometimes
called pretty-good POVM~\cite{HW94, Hal97, Bus07, BH09} for
$\set{E}$ or $\rho^{-1}$-distortion~\cite{JRW94}. Given a
POVM $\set{P} = \{ \Pi_y \}$ and a normalized state $\rho$,
then $\{\rho^{1/2} \Pi_y \rho^{1/2}\}$ is an ensemble with
average state $\rho$, sometimes called pretty-good ensemble
for $\set{E}$ or $\rho$-distortion.

The accessible information~\cite{LL66} was introduced as a
measure of how much information can be extracted from an
ensemble.
\begin{dfn}[Accessible Information]\label{def:accinfo}
  The accessible information $A(\set{E})$ of a discrete or
  continuous ensemble $\set{E}$ is the supremum over any
  discrete or continuous POVM $\set{P}$ of the mutual
  information $I(\set{E}, \set{P})$, namely
  \begin{align*}
    A(\set{E}) := \sup_{\set{P}} I(\set{E}, \set{P}).
  \end{align*}
  Any POVM $\set{P}$ that attains the accessible information
  is called maximally informative for the ensemble
  $\set{E}$.
\end{dfn}

The informational power~\cite{DDS11} was introduced as a
measure of how much information can be extracted by a POVM.
\begin{dfn}[Informational Power]\label{def:infopower}
  The informational power $W(\set{P})$ of a discrete
    or continuous POVM $\set{P}$ is the supremum over any
  discrete or continuous ensemble $\set{E}$ of the
  mutual information $I(\set{E}, \set{P})$, namely
  \begin{align*}
    W(\set{P}) := \sup_{\set{E}} I(\set{E}, \set{P}).
  \end{align*}
  Any ensemble $\set{E}$ that attains the informational
  power is called maximally informative for the POVM
  $\set{P}$.
\end{dfn}

The following Lemma, appearing also in Ref.~\cite{DDS11},
provides a fundamental relation between accessible
information and informational power.
\begin{lmm}
  \label{lmm:duality}
  For any discrete or continuous POVM $\set{P} = \{ \Pi_y
  \}$, the informational power $W(\set{P})$ is the supremum
  over normalized states $\rho$ of the accessible
  information of the ensemble $\{ \rho^{1/2} \Pi_y
  \rho^{1/2} \}$, namely
  \begin{align}
     W(\set{P}) = \sup_{\rho} A( \{ \rho^{1/2} \Pi_y
     \rho^{1/2} \}). \label{eq:duality}
  \end{align}
\end{lmm}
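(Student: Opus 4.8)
The plan is to establish (\ref{eq:duality}) by proving inequality in both directions. The key observation is that for a fixed POVM $\set{P} = \{\Pi_y\}$, every ensemble $\set{E} = \{\rho_x\}$ with average state $\rho = \sum_x \rho_x$ induces a joint distribution $\Tr[\rho_x \Pi_y]$, and this joint distribution can be factored through the pretty-good ensemble of $\set{P}$ relative to $\rho$. So the first step is to note that $I(\set{E},\set{P})$ depends on $\set{E}$ only through (i) the average state $\rho$ and (ii) how the ensemble decomposes that average state, and to reorganize the supremum in the definition of $W(\set{P})$ as a double supremum: first over normalized states $\rho$, then over all ensembles with that fixed average state $\rho$.

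\textbf{Proof of $W(\set{P}) \ge \sup_\rho A(\{\rho^{1/2}\Pi_y\rho^{1/2}\})$.}
Fix a normalized state $\rho$ and consider the pretty-good ensemble $\set{F} := \{\rho^{1/2}\Pi_y\rho^{1/2}\}$, which has average state $\rho^{1/2}(\sum_y \Pi_y)\rho^{1/2} = \rho$. For any POVM $\set{Q} = \{Q_z\}$ the joint probabilities are $\Tr[\rho^{1/2}\Pi_y\rho^{1/2} Q_z] = \Tr[\Pi_y \, \rho^{1/2} Q_z \rho^{1/2}]$, so setting $\sigma_z := \rho^{1/2} Q_z \rho^{1/2}$ gives an ensemble (since $\sum_z \sigma_z = \rho$ and each $\sigma_z \ge 0$) whose joint distribution with $\set{P}$ coincides with that of $\set{F}$ with $\set{Q}$. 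Hence $I(\set{F},\set{Q}) = I(\{\sigma_z\},\set{P}) \le W(\set{P})$. Taking the supremum over $\set{Q}$ gives $A(\set{F}) \le W(\set{P})$, and then the supremum over $\rho$ yields the claimed inequality.

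\textbf{Proof of $W(\set{P}) \le \sup_\rho A(\{\rho^{1/2}\Pi_y\rho^{1/2}\})$.}
Conversely, take any ensemble $\set{E} = \{\rho_x\}$ with average state $\rho$. By the reduction discussed before Definition~\ref{def:accinfo} we may assume $\rho$ is invertible on $\hilb{H} = \supp\rho$ (restricting $\set{P}$ to $\{P\Pi_y P\}$ changes nothing), so $\rho^{-1/2}$ exists. Define $Q_x := \rho^{-1/2}\rho_x\rho^{-1/2}$; then $\{Q_x\}$ is a POVM since $\sum_x Q_x = \rho^{-1/2}\rho\rho^{-1/2} = \openone$ and each $Q_x \ge 0$. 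The joint probabilities of $\set{E}$ with $\set{P}$ are $\Tr[\rho_x \Pi_y] = \Tr[\rho^{1/2} Q_x \rho^{1/2}\,\Pi_y] = \Tr[Q_x \,\rho^{1/2}\Pi_y\rho^{1/2}]$, which is exactly the joint distribution of the ensemble $\{\rho^{1/2}\Pi_y\rho^{1/2}\}$ with the POVM $\{Q_x\}$. Therefore $I(\set{E},\set{P}) = I(\{\rho^{1/2}\Pi_y\rho^{1/2}\},\{Q_x\}) \le A(\{\rho^{1/2}\Pi_y\rho^{1/2}\}) \le \sup_\rho A(\{\rho^{1/2}\Pi_y\rho^{1/2}\})$. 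Taking the supremum over $\set{E}$ completes the proof.

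\textbf{The main obstacle} is the invertibility reduction in the second direction: one must be careful that restricting to $\supp\rho$ does not decrease the informational power, which is precisely guaranteed by the argument given in the excerpt showing $\{P\Pi_y P\}$ reproduces all joint probabilities. A secondary subtlety is handling the continuous case, where the ``ensembles'' $\{\sigma_z\}$ and $\{Q_x\}$ are operator-valued densities and one should check measurability and that the $\rho^{\pm 1/2}$-conjugation maps densities to densities consistently; this is routine once the discrete bijection between ensembles-with-average-$\rho$ and POVMs is set up, since it is linear and continuous.
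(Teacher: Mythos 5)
Your proposal is correct and takes essentially the same route as the paper: the paper's proof is a three-line chain of equalities resting on the identity $\Tr[\rho_x\Pi_y]=\Tr[\rho^{-1/2}\rho_x\rho^{-1/2}\,\rho^{1/2}\Pi_y\rho^{1/2}]$ and the implicit bijection (via conjugation by $\rho^{\pm1/2}$) between ensembles with average state $\rho$ and POVMs, which is exactly the content of your two inequalities. You have merely unpacked that argument into its two directions and made the surjectivity of the pretty-good correspondence explicit, which is a fair, slightly more careful rendering of the same proof.
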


\begin{proof}
  For both the discrete or continuous case, one has that
  \begin{align*}
    W(\set{P}) & = \sup_{\set{E}} I(\set{E},\set{P}) \\ & =
    \sup_{\set{E}} I(\{ \rho^{-1/2} \rho_x \rho^{-1/2} \},
    \{ \rho^{1/2} \Pi_y \rho^{1/2} \}) \\ & = \sup_{\rho} A(
    \{ \rho^{1/2} \Pi_y \rho^{1/2} \}),
  \end{align*}
  where first equality follows from
  Definition~\ref{def:infopower}; second equality follows
  from the identity $\Tr[ \rho_x \Pi_y ] = \Tr[
    \rho^{-1/2} \rho_x \rho^{-1/2} \rho^{1/2}
    \Pi_y \rho^{1/2}]$; third equality follows from
  Definition~\ref{def:accinfo}.
\end{proof}

We can now provide the first main result of this work.
\begin{thm}[Bounds for arbitrary ensembles and POVMs]
  \label{thm:povmbound}
  For any $d$-dimensional discrete or continuous rank-one
  ensemble $\set{E} = \{ \rho_x \}$ with average state
  $\rho$ and any $d$-dimensional discrete or continuous
  rank-one POVM $\set{P} = \{ \Pi_y \}$ one has that
  \begin{align}
    0 \le A(\set{E}) \le
    \log(d),\label{eq:accinfobound}
  \end{align}
  and
  \begin{align}
    \log(d) - \frac1{\ln(2)} \sum_{n=2}^d \frac1n \le &
    W(\set{P}) \le \log(d)\label{eq:infopowerbound},
  \end{align}
  Lower and upper bounds in Eq.~\eqref{eq:accinfobound} are
  saturated, respectively, when $\rho_x \propto \rho$ for
  all $x$, and when $\rho_x = \frac1d \ket{e_x}\bra{e_x}$
  for some orthonormal basis $\ket{e_x}$. Lower and upper
  bounds in Eq.~\eqref{eq:infopowerbound} are saturated,
  respectively, when $\set{P}$ is the Scrooge POVM and when
  $\Pi_y = \ket{e_y}\bra{e_y}$ for some orthonormal basis
  $\ket{e_y}$.
\end{thm}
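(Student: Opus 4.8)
The plan is to derive all parts from three ingredients: the Holevo bound~\cite{Hol73}, the duality of Lemma~\ref{lmm:duality}, and the subentropy lower bound of~\cite{JRW94}. For the upper bound in~\eqref{eq:accinfobound}, recall that for any ensemble $\set{E}=\{\rho_x\}$ with average state $\rho$ and any POVM $\set{P}$, the Holevo bound gives $I(\set{E},\set{P})\le S(\rho)-\sum_x\Tr[\rho_x]\,S(\rho_x/\Tr[\rho_x])$, where $S$ is the von Neumann entropy (with the obvious integral version in the continuous case); for a rank-one ensemble the states $\rho_x/\Tr[\rho_x]$ are pure, so $A(\set{E})\le S(\rho)\le\log d$. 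The bound $A(\set{E})\ge0$ is immediate since $I\ge0$ always (and $I=0$ for the trivial POVM $\{\openone\}$). For the upper bound in~\eqref{eq:infopowerbound}, I invoke Lemma~\ref{lmm:duality}: if $\set{P}=\{\Pi_y\}$ is rank-one then for any normalized $\rho$ the operators $\rho^{1/2}\Pi_y\rho^{1/2}$ are rank-one and sum to $\rho$, so $\{\rho^{1/2}\Pi_y\rho^{1/2}\}$ is a rank-one ensemble; bounding its accessible information by $\log d$ via~\eqref{eq:accinfobound} and taking the supremum over $\rho$ gives $W(\set{P})\le\log d$.

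The only nonroutine inequality is the lower bound in~\eqref{eq:infopowerbound}. Choosing $\rho=\openone/d$ in Lemma~\ref{lmm:duality} yields $W(\set{P})\ge A(\{\Pi_y/d\})$, and $\{\Pi_y/d\}$ is a rank-one ensemble with average state $\openone/d$. The subentropy lower bound of~\cite{JRW94} then gives $A(\{\Pi_y/d\})\ge Q(\openone/d)$, where $Q$ denotes the subentropy, and $Q(\openone/d)=\log(d)-\frac1{\ln(2)}\sum_{n=2}^d\frac1n$ by evaluating the subentropy at the maximally mixed state (equivalently, by taking the confluent limit $\lambda_1=\dots=\lambda_d=1/d$ in the subentropy formula of~\cite{JRW94}). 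This is exactly the claimed lower bound.

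It remains to check the saturation claims. If $\rho_x=\frac1d\ket{e_x}\bra{e_x}$, then the von Neumann measurement $\{\ket{e_y}\bra{e_y}\}$ produces the joint distribution $P(x,y)=\frac1d\delta_{xy}$, whence $I(\set{E},\set{P})=\log d$ and~\eqref{eq:accinfobound} is attained; dually, if $\Pi_y=\ket{e_y}\bra{e_y}$ then the ensemble $\{\frac1d\ket{e_y}\bra{e_y}\}$ gives $W(\set{P})=\log d$. If $\rho_x\propto\rho$ for all $x$, then $P(x,y)=P(x)P(y)$ for \emph{every} POVM, so $A(\set{E})=0$. Finally, for the Scrooge POVM $\set{P}$ the ensemble $\{\rho^{1/2}\Pi_y\rho^{1/2}\}$ appearing in Lemma~\ref{lmm:duality} is precisely the Scrooge ensemble of $\rho$ (the $\rho^{1/2}$-distortion of the uniform ensemble), whose accessible information equals $Q(\rho)$ by~\cite{JRW94}; since $Q$ is unitarily invariant and concave~\cite{JRW94}, it is maximized at $\openone/d$, so $W(\set{P})=\sup_\rho Q(\rho)=Q(\openone/d)$, which meets the lower bound.

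The main obstacle is this last equality for the Scrooge POVM: it rests on the nontrivial facts that the Scrooge ensemble \emph{attains} (not merely respects) the subentropy bound and that $Q$ is Schur-concave, and it requires some care to ensure that the Holevo and subentropy inequalities are applied in their continuous-ensemble and continuous-POVM versions, since the Scrooge objects are continuous.
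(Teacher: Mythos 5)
Your proof is correct and follows essentially the same route as the paper: the Holevo--Yuen--Ozawa bound for the upper bounds, Lemma~\ref{lmm:duality} to pass between ensembles and POVMs, and the subentropy results of Ref.~\cite{JRW94} for the lower bound on $W$ and its saturation by the Scrooge POVM. The only (cosmetic, and arguably cleaner) difference is that for the lower bound in Eq.~\eqref{eq:infopowerbound} you fix $\rho=\openone/d$ and invoke the subentropy bound $A(\{\Pi_y/d\})\ge Q(\openone/d)$ directly, deferring tightness to a separate step, whereas the paper evaluates $\inf_{\set{P}}W(\set{P})$ via an infimum--supremum exchange justified by the Scrooge POVM minimizing $A(\{\rho^{1/2}\Pi_y\rho^{1/2}\})$ for every $\rho$ simultaneously; both arguments rest on exactly the same facts from Secs.~II--IV of Ref.~\cite{JRW94}.
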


\begin{proof}
  Lower bound in Eq.~\eqref{eq:accinfobound} follows from
  the non-negativity of mutual information.
  
  Upper bound in Eq.~\eqref{eq:accinfobound} is the well
  known Holevo-Yuen-Ozawa bound~\cite{Hol73, YO93}.

  Let us prove the lower bound in
  Eq.~\eqref{eq:infopowerbound}. One has that
  \begin{align*}
    \inf_{\set{P} | \set{P} \textrm{ is rank-$1$}} W(\set{P}) & =
    \inf_{\set{P} | \set{P} \textrm{ is rank-$1$}} \sup_{\rho} A(
    \{ \rho^{1/2} \Pi_y \rho^{1/2} \}) \\ & =
    \sup_{\rho} A( \{ \rho^{1/2} \Pi_y^*
    \rho^{1/2} \}) \\ & = \log(d) - \frac1{\ln(2)}
    \sum_{n=2}^d \frac1n,
  \end{align*}
  where first equality follows from
  Lemma~\eqref{lmm:duality}; second equality follows from
  the fact proved in Sec. II and Sec. III of
  Ref.~\cite{JRW94} that the infimum over $\set{P}$ for any
  $\rho$ is attained when $\set{P}$ is the Scrooge POVM
  $\set{P}^* = \{ \Pi_y^* \}$, and thus infimum and supremum
  commute; third equality follows from the fact proved in
  Sec. IV of Ref.~\cite{JRW94} that the supremum over
  $\rho$, in the case of Scrooge POVM, is attained when
  $\rho = \frac1d \openone$. The lower bound in
  Eq.~\eqref{eq:infopowerbound} is then proved.

  Let us prove the upper bound in
  Eq.~\eqref{eq:infopowerbound}. By absurd suppose that
  there exists a POVM $\set{P}^*$ such that $W(\set{P}^*) >
  \log(d)$, and call $\set{E}^*$ its maximally informative
  ensemble. Then $A(\set{E}^*) \ge W(\set{P}^*) > \log(d)$
  contradicting Holevo-Yuen-Ozawa bound. The upper bound in
  Eq.~\eqref{eq:infopowerbound} is then proved.
\end{proof}

Notice that, if the hypotheses on the rank of $\set{E}$ and
$\set{P}$ are dropped in Theorem~\ref{thm:povmbound}, while
Eq.~\ref{eq:accinfobound} still holds, the lower bound in
Eq.~\eqref{eq:infopowerbound} does not hold anymore and must
be replaced with the bound $0 \le W(\set{P})$, which is
trivially saturated when $\Pi_y \propto \openone$ for all
$y$. Finally, notice that the expression in the left hand
side of Eq.~\eqref{eq:infopowerbound} appeared in Eq.~(42)
of a very recent work~\cite{SS14} by S{\l}omczy\'{n}ski and
Szymusiak as the average value of the {\em relative entropy
  of measurement} over all pure states.

\section{Bounds for SIC ensembles and SIC measurements}
\label{sec:sicbound}

In this Section we provide upper and lower bounds on the
accessible information (resp., informational power) of
rank-one SIC ensembles (resp., rank-one SIC POVMs).

Let us first introduce symmetric informationally complete
(SIC) sets of operators.
\begin{dfn}[SIC set of operators]
  \label{dfn:sicset}
  A $d$-dimensional set $X = \{ X_x \}_{x=1}^{d^2}$ of $d^2$
  rank-one positive operators $X_x$ satisfying $\Tr[X_x] =
  \lambda$ for any $x$ and for some $\lambda$ and
  $\Tr[X_xX_y] = \frac{\lambda^2}{d+1}$ for any $x \ne y$ is
  called symmetric informationally complete (SIC) set of
  operators.
\end{dfn}

The following Lemma, appearing also in Ref.~\cite{AFF11},
provides a fundamental property of SIC set of operators.
\begin{lmm}\label{lmm:average}
  For any $d$-dimensional SIC set $X = \{ X_x \}$ of
  operators $X_x$ with $\Tr[ X_x ] = \lambda$, the average
  operator $\bar X := \sum_x X_x$ is $\bar X = d \lambda
  \openone$.
\end{lmm}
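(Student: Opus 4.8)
The plan is to pin down $\bar X$ from just its first two Hilbert--Schmidt moments, $\Tr[\bar X]$ and $\Tr[\bar X^2]$, together with the elementary fact that among all Hermitian operators on a $d$-dimensional Hilbert space with a prescribed trace, the unique one of minimal Hilbert--Schmidt norm is the corresponding multiple of $\openone$. The SIC conditions of Definition~\ref{dfn:sicset} are exactly enough to compute both moments.

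First I would record the ``self-overlap'' of a SIC element: since each $X_x$ is rank-one, positive, and has $\Tr[X_x]=\lambda$, we may write $X_x = \lambda\ket{\psi_x}\bra{\psi_x}$ with $\braket{\psi_x}{\psi_x}=1$, so that $X_x^2 = \lambda X_x$ and $\Tr[X_x^2]=\lambda^2$; in particular $\lambda>0$. Next I would compute the two moments of $\bar X=\sum_x X_x$. The trace is immediate, $\Tr[\bar X] = \sum_{x=1}^{d^2}\Tr[X_x] = d^2\lambda$. For the Hilbert--Schmidt norm I would split the double sum into its diagonal and off-diagonal parts and use Definition~\ref{dfn:sicset}:
\begin{align*}
  \Tr[\bar X^2] &= \sum_{x}\Tr[X_x^2] + \sum_{x\ne y}\Tr[X_xX_y] \\
  &= d^2\lambda^2 + d^2(d^2-1)\frac{\lambda^2}{d+1} = d^3\lambda^2 .
\end{align*}

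Finally I would invoke the Cauchy--Schwarz inequality in the form $\Tr[A^2]\ge \Tr[A]^2/d$, valid for every Hermitian $A$ on a $d$-dimensional space (apply Cauchy--Schwarz to the vector of eigenvalues of $A$ against the all-ones vector), with equality if and only if $A$ is a scalar multiple of $\openone$. Here $\Tr[\bar X]^2/d = d^4\lambda^2/d = d^3\lambda^2 = \Tr[\bar X^2]$, so $\bar X$ saturates the inequality; hence $\bar X = c\openone$ for some $c$, and taking traces gives $cd = d^2\lambda$, i.e. $\bar X = d\lambda\openone$.

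I do not anticipate a genuine obstacle: the only points requiring care are that the argument uses the full count $d^2$ of elements (it is precisely this count that makes the two computed moments compatible with a multiple of the identity) and that $\lambda\ne 0$, so the rank-one normalization is legitimate. An alternative, perhaps more standard, route would be to observe that the Gram matrix $[\Tr[X_xX_y]]$ equals $\lambda^2\frac{d}{d+1}\openone + \frac{\lambda^2}{d+1}J$ with $J$ the all-ones matrix, which is nonsingular, so that $\{X_x\}$ is a basis of the $d^2$-dimensional real space of Hermitian operators; then $\Tr[X_x(\bar X - d\lambda\openone)] = \lambda^2 + (d^2-1)\frac{\lambda^2}{d+1} - d\lambda^2 = 0$ for all $x$ forces $\bar X = d\lambda\openone$. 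Both routes are short; I would present the moment argument as the main one.
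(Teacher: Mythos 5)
Your proof is correct and is essentially the paper's own argument: both compute $\Tr[\bar X]=d^2\lambda$ and $\Tr[\bar X^2]=d^3\lambda^2$ from the SIC conditions and conclude that $\bar X$ must be $d\lambda\openone$. The paper phrases the last step as the direct expansion $\Tr[(d\lambda\openone-\bar X)^2]=0$, which is the same identity as your saturation of $\Tr[A^2]\ge\Tr[A]^2/d$, so the two routes coincide.
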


\begin{proof}
  For any operator $A$ we have that $\Tr[A^\dagger A] = 0$
  if and only if $A=0$. Taking $A = d \lambda \openone -
  \bar X$ we have
  \begin{align*}
    & \Tr[(d \lambda \openone - \bar X)^2] \\ = & d^2
    \lambda^2 \Tr[\openone] + \Tr[\bar X^2] - 2 d \lambda
    \Tr[\bar X] \\ = & d^3 \lambda^2 + d^2 \lambda^2 +
    \frac{d^2(d^2-1)}{d+1} \lambda^2 -2 d^3 \lambda^2 \\ = &
    0.
  \end{align*}
  by noticing that $\Tr[\bar X] = \sum_{x=1}^{d^2} \lambda =
  d^2 \lambda$ and $\Tr[\bar X^2] = \sum_x \lambda^2 +
  \sum_{x \ne y} \frac{\lambda^2}{d+1} = d^2 \lambda^2 +
  \frac{d^2(d^2-1)}{d+1} \lambda^2$.
\end{proof}

\begin{dfn}[SIC Ensemble]\label{dfn:sicensamble}
  A $d$-dimensional SIC set $\set{E} = \{ \rho_x \}$ of
  operators $\rho_x$ with $\Tr[ \rho_x ] = \frac1{d^2}$ is
  called symmetric informationally complete (SIC) ensemble.
\end{dfn}

\begin{dfn}[SIC POVM]\label{dfn:sicpovm}
  A $d$-dimensional SIC set $\set{P} = \{ \Pi_y \}$ of operators
  $\Pi_y$ with $\Tr[ \Pi_y ] = \frac1d$ is called symmetric
  informationally complete (SIC) POVM.
\end{dfn}

Notice that Lemma~\ref{lmm:average} ensures that the average
state of any SIC ensemble is $\frac1d \openone$ and that in
Definition~\ref{dfn:sicpovm} the necessary condition $\sum_y
\Pi_y = \openone$ is automatically granted. Notice also that
given any SIC POVM $\set{P} = \{ \Pi_x \}$, one has that $\{
\frac1d \Pi_x \}$ is a SIC ensemble. Analogously, given any
SIC ensemble $\set{E} = \{ \rho_x\}$, one has that $\{ d
\rho_x \}$ is a SIC POVM, since the average state of any SIC
ensemble is $\frac1d \openone$. Then there is a one-to-one
correspondence between SIC POVMs and SIC ensembles given by
renormalization.

We can now provide the second main result of this work.
\begin{thm}[Bounds for SIC ensembles and SIC POVMs]
  \label{thm:sicbound}
  For any $d$-dimensional SIC ensemble $\set{E} = \{ \rho_x
  \}$ and any $d$-dimensional SIC POVM $\set{P} = \{ \Pi_y \}$
  one has that
  \begin{align}
    \log(d) - \frac1{\ln 2} \sum_{n=2}^d \frac1n \le &
    A(\set{E}) \le
    \log\frac{2d}{d+1}, \label{eq:sicaccinfobound}
  \end{align}
  and
  \begin{align}
    \log(d) - \frac1{\ln 2} \sum_{n=2}^d \frac1n \le &
    W(\set{P}) \le
    \log\frac{2d}{d+1}\label{eq:sicinfopowerbound}.
  \end{align}
\end{thm}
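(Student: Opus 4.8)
The plan is to prove the four inequalities in Theorem~\ref{thm:sicbound} by reducing everything to a single inequality, thanks to the tight correspondence between SIC ensembles and SIC POVMs noted just above the statement, together with Lemma~\ref{lmm:duality}. Since multiplying all $\rho_x$ by $d$ (resp. dividing all $\Pi_y$ by $d$) does not change the Born-rule probabilities $\Tr[\rho_x\Pi_y]$ up to an overall rescaling of the marginal $p(X{=}x)$, and in fact $\set{E}=\{\rho_x\}$ a SIC ensemble implies $\{d\rho_x\}$ is a SIC POVM with the same $I(\set{E},\cdot)$ structure, the accessible-information and informational-power statements are essentially two faces of the same computation. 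Concretely, the lower bounds in both \eqref{eq:sicaccinfobound} and \eqref{eq:sicinfopowerbound} are already contained in Theorem~\ref{thm:povmbound}: a SIC POVM is a rank-one POVM, so $W(\set{P})\ge\log(d)-\frac1{\ln2}\sum_{n=2}^d\frac1n$ is immediate from \eqref{eq:infopowerbound}; and for the ensemble side one applies the same bound to the dual SIC POVM $\{d\rho_x\}$ and uses $A(\set{E})\ge A(\{\tfrac1d\Pi_y\})$ type reasoning, i.e. one exhibits the explicit ``pretty-good''-beating measurement promised in the introduction and computes the mutual information it yields, which equals exactly the subentropy-type quantity $\log(d)-\frac1{\ln2}\sum_{n=2}^d\frac1n$.

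The real content is the upper bound $A(\set{E})\le\log\frac{2d}{d+1}$ (the POVM version then follows by Lemma~\ref{lmm:duality}: $W(\set{P})=\sup_\rho A(\{\rho^{1/2}\Pi_y\rho^{1/2}\})$, and for a SIC POVM the supremum over $\rho$ will be attained at $\rho=\frac1d\openone$ by symmetry, giving back a SIC ensemble up to normalization, so $W(\set{P})\le\log\frac{2d}{d+1}$ as well). To bound $A(\set{E})=\sup_{\set P}I(\set E,\set P)$, first reduce to rank-one POVMs $\set P=\{\mu_y\ket{\phi_y}\bra{\phi_y}\}$ with $\sum_y\mu_y\ket{\phi_y}\bra{\phi_y}=\openone$ (a standard convexity/extremality argument: the mutual information is convex in the POVM, so extreme POVMs, which are rank-one, suffice). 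Writing $\rho_x=\frac1{d^2}\ket{\psi_x}\bra{\psi_x}$, the joint distribution is $p(x,y)=\frac{\mu_y}{d^2}|\braket{\psi_x}{\phi_y}|^2$, with marginals $p(x)=\frac1{d^2}$ and $p(y)=\frac{\mu_y}{d}$. Then
\begin{align*}
I(\set E,\set P)=H(X)-H(X|Y)=\log(d^2)+\sum_{x,y}p(x,y)\log\frac{p(x,y)}{p(y)}.
\end{align*}
The key estimate is a bound on the conditional entropy $H(X|Y)$ from below, equivalently an upper bound on the ``purity'' of the conditional distribution $p(x|y)=\frac{|\braket{\psi_x}{\phi_y}|^2}{\sum_{x'}|\braket{\psi_{x'}}{\phi_y}|^2}$: here the SIC property enters decisively via $\sum_x|\braket{\psi_x}{\phi}|^4=\frac{2d}{d+1}\bigl(\sum_x|\braket{\psi_x}{\phi}|^2\bigr)^2\cdot\frac1{?}$ — more precisely, for any pure state $\ket\phi$, the SIC conditions Definition~\ref{dfn:sicset} give $\sum_x|\braket{\psi_x}{\phi}|^2=d$ and $\sum_x|\braket{\psi_x}{\phi}|^4=\frac{2d}{d+1}$, so the normalized vector $\bigl(\frac1d|\braket{\psi_x}{\phi}|^2\bigr)_x$ has $\ell_2$-norm squared equal to $\frac{2}{d(d+1)}$, i.e. its ``collision entropy'' (Rényi-2 entropy) is exactly $\log\frac{d(d+1)}{2}$. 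Since Shannon entropy dominates Rényi-2 entropy, $H(X|Y{=}y)\ge\log\frac{d(d+1)}{2}$ for every $y$, hence $H(X|Y)\ge\log\frac{d(d+1)}{2}$, and therefore $I(\set E,\set P)\le\log d^2-\log\frac{d(d+1)}{2}=\log\frac{2d}{d+1}$, as claimed.

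The main obstacle is making the reduction to rank-one POVMs fully rigorous in the continuous case and, more delicately, justifying the $\sup_\rho$ localization at $\rho=\frac1d\openone$ for the informational-power half: one needs that the maximally informative ensemble for a SIC POVM can be taken with average state $\frac1d\openone$, which should follow from a symmetrization argument over the (projective unitary) symmetry group of the SIC POVM together with concavity of $I(\cdot,\set P)$ in the ensemble, but the continuous/Haar bookkeeping requires care. A secondary subtlety is that the Rényi-2 $\le$ Shannon step is lossy, so this argument gives the stated upper bound but not necessarily its tightness; tightness is deferred to Sec.~\ref{sec:sicqubit} for $d=2,3$ precisely because the gap must be closed by an explicit optimal measurement rather than by this entropy inequality. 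I would also double-check the identity $\sum_x|\braket{\psi_x}{\phi}|^4=\frac{2d}{d+1}$ by expanding $\bigl(\sum_x\ket{\psi_x}\bra{\psi_x}\otimes\ket{\psi_x}\bra{\psi_x}\bigr)$ and using Lemma~\ref{lmm:average} applied on the symmetric subspace, which is the one routine computation underlying the whole bound.
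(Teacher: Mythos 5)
Your bound on the accessible information is sound: the identities $\sum_x|\braket{\psi_x}{\phi}|^2=d$ and $\sum_x|\braket{\psi_x}{\phi}|^4=\frac{2d}{d+1}$ do hold for a SIC set (they are the 2-design/index-of-coincidence property), the collision-entropy argument correctly gives $H(X|Y{=}y)\ge\log\frac{d(d+1)}{2}$ for every rank-one outcome, and the reduction to rank-one POVMs is standard. In effect you have re-derived, in the dual direction, the bound that the paper imports as a black box from Rastegin (Proposition 6 of Ref.~\cite{Ras13}). The lower bounds are also essentially right, though your treatment of the lower bound on $A(\set{E})$ is vague: the clean statement is that a SIC ensemble is a rank-one ensemble with average state $\frac1d\openone$, and the subentropy bound of Ref.~\cite{JRW94} lower-bounds the accessible information of any such ensemble by $\log d-\frac1{\ln2}\sum_{n=2}^d\frac1n$.

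The genuine gap is in your derivation of $W(\set{P})\le\log\frac{2d}{d+1}$ from $A(\set{E})\le\log\frac{2d}{d+1}$. You assert that in $W(\set{P})=\sup_\rho A(\{\rho^{1/2}\Pi_y\rho^{1/2}\})$ the supremum is attained at $\rho=\frac1d\openone$ ``by symmetry,'' and propose to justify this by averaging over the symmetry group of the SIC POVM. But a generic SIC POVM is not known to be covariant under any group; the paper explicitly lists ``does there exist a maximally informative ensemble with average state $d^{-1}\openone$?'' as an open problem in its conclusion, precisely because it would follow from group covariance, which is itself unproven. So your argument for half of the theorem rests on an open conjecture. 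The fix is to reverse the logical order, as the paper does: bound $W(\set{P})$ directly by writing $I=H(Y)-H(Y|X)$ with $Y$ the SIC outcome, using $H(Y)\le\log d^2$ and $H(Y|X{=}x)\ge\log\frac{d(d+1)}{2}$ --- which holds for \emph{every} input state $\rho_x$ (pure by your computation, mixed by convexity of the index of coincidence), hence for every ensemble with no assumption on its average state. The bound $A(\set{E})\le W(\{d\rho_x\})$ then follows from Lemma~\ref{lmm:duality} in the easy direction, since choosing $\rho=\frac1d\openone$ in the supremum recovers exactly the SIC ensemble.
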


\begin{proof}
  The lower bound in Eq.~\eqref{eq:sicinfopowerbound}
  follows from Eq.~\eqref{eq:infopowerbound}.
  
  Let us prove the upper bound in
  Eq.~\eqref{eq:sicinfopowerbound}. Upon introducing an
  ensemble $\set{E} = \{ \rho_x \}$ and two random variables
  $X$ and $Y$ such that $p(X{=}x) = \Tr[\rho_x]$ and
  $p(X{=}x,Y{=}y) = \Tr[\rho_x\Pi_y]$, we have
  $I(\set{E},\set{P}) = I(X;Y) = H(Y) - H(Y|X)$. Entropy
  $H(Y)$ is trivially upper bounded by $H(Y) \le
  \log(d^2)$. Since the entropy $H(Y|X{=}x)$ is lower
  bounded by $H(Y|X{=}x) \ge \log(d(d+1)/2)$ as proven in a
  recent interesting work by Rastegin~\cite{Ras13}, the
  conditional entropy $H(Y|X) = \sum_x p(X{=}x) H(Y|X{=}x)$
  is lower bounded by $H(Y|X) \ge \log(d(d+1)/2)$. The upper
  bound in Eq.~\eqref{eq:sicinfopowerbound} is then proved.

  The upper bound in Eq.~\eqref{eq:sicaccinfobound} follows
  from the upper bound in Eq.~\eqref{eq:sicinfopowerbound}
  and from Lemma~\ref{lmm:duality}.

  Let us prove the lower bound in
  Eq.~\eqref{eq:sicaccinfobound}. One has that
  \begin{align*}
    \inf_{\set{E} | \set{E} \textrm{ is SIC}} A(\set{E}) &
    \ge \inf_{\set{P} | \set{P} \textrm{ is rank-$1$}}
    A\left(\frac1d \set{P}\right)\\ & = \log(d) -
    \frac1{\ln(2)} \sum_{n=2}^d \frac1n.
  \end{align*}
  where first inequality follows from the fact that the set
  of SIC ensembles is a subset of the set of rank-one
  ensembles with average state $\frac1d \openone$ due to
  Lemma~\ref{lmm:average}; second equality follows from the
  fact proved in Sec. II and III of Ref.~\cite{JRW94} that
  the infimum over $\set{P}$ is attained when $\set{P}$ is
  the Scrooge POVM. The lower bound in
  Eq.~\eqref{eq:infopowerbound} is then proved.
\end{proof}

Notice that, for any dimension $d$, the lower bound on
accessible information and informational power given by
Eqs.~\eqref{eq:sicaccinfobound}
and~\eqref{eq:sicinfopowerbound} are tighter than those
provided by the pretty-good POVM and ensemble. Indeed, for
any $d$-dimensional SIC ensemble $\set{E} = \{ \rho_x \}$,
the corresponding pretty-good POVM is $\{ d \rho_x \}$;
analogously, for any $d$-dimensional SIC POVM $\set{P} = \{
\Pi_x \}$, the corresponding pretty-good ensemble is $\{
d^{-1} \Pi_x\}$; the mutual information given by the
pretty-good strategy is then
\begin{align*}
  & I(\{ \rho_x \}, \{ d \rho_x \})\\ = & I(\{ d^{-1} \Pi_x \},
  \{ \Pi_x \}) \\ = & \frac{2d}{d^2(d+1)}\log d -
  \frac{d-1}{d^2(d+1)} \log (d+1) \\ \le & \log d -
  \frac1{\ln 2} \sum_{n=2}^d \frac1n,
\end{align*}
where the first and second equalities follow from
Definition~\ref{dfn:sicset} and the inequality follows by
direct inspection.

Notice also that the values of the mutual information in
lower bounds in Eqs.~\eqref{eq:sicaccinfobound}
and~\eqref{eq:sicinfopowerbound} can be achieved by the
Scrooge measurement and ensemble, respectively, as shown in
the proof of Theorem~\ref{thm:sicbound}. Therefore, when
extracting information from a SIC ensemble or by a SIC
measurement, the strategy based on the Scrooge construction
performs strictly better than the pretty-good one. Note,
however, that pretty-good construction was not introduced
with respect to accessible information extraction, but
aiming to maximize the probability of correct guess.

Finally, as a ``numerological'' aside, we notice the
intriguing fact that the quantity $\frac{2}{d+1}$, appearing
in the right hand side of both
Eqs.~\eqref{eq:sicaccinfobound} and
~\eqref{eq:sicinfopowerbound}, appears also in a number of
other quantities related to SIC POVMs. For example, the {\em
  index of coincidence} $C$, defined for any POVM $\set{P} =
\{ \Pi_y \}$ and any state $\rho$ as $C = \sum_y (\Tr[\rho
  \Pi_y])^2$, is constant and equal to $\frac{2}{d(d+1)}$ in
the case of SIC POVMs and pure
states~\cite{KR05,Ras13}. Also, the so-called {\em
  quantumness}~\cite{FS03} $Q$ of a Hilbert space,
representing the worst-case difficulty of transmitting
quantum states through a purely classical communication
channel, was shown~\cite{Fuc04} to be achieved whenever the
states to be transmitted constitute a SIC ensemble, in which
case $Q = \frac{2}{d+1}$. As a further example, the {\em
  total uncertainty}~\cite{ADF14} $T$ on the outcome of a
complete set of mutually unbiased bases~\cite{Ivo81} in any
prime dimension is known to be lower bounded as $T \ge (d+1)
\log\frac{d+1}{2}$, with equality if and only if the input
states constitute a SIC ensemble~\cite{ADF14}.

\subsection{The cases of qubits and qutrits}
\label{sec:sicqubit}

The following two corollaries prove the tightness of the
upper bounds in Eqs.~\eqref{eq:sicaccinfobound} and
\eqref{eq:sicinfopowerbound} for qubits and qutrits.
\begin{cor}
  \label{thm:sicqubit}
  Upper bounds in Eqs.~\eqref{eq:sicaccinfobound} and
  \eqref{eq:sicinfopowerbound} are tight in dimension $d=2$,
  namely the $2$-dimensional SIC ensemble (tetrahedral
  ensemble) $\set{E}$ and the $2$-dimensional SIC POVM
  (tetrahedral POVM) $\set{P}$ are such that
  \begin{align*}
    A(\set{E}) = W(\set{P}) = \log\frac43.
  \end{align*}
  The accessible information is achieved by the so-called
  antitetrahedral POVM and the informational power by the
  so-called antitetrahedral ensemble.
\end{cor}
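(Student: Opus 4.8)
The plan is to pair the upper bound $A(\set{E}) \le \log\frac43$, $W(\set{P}) \le \log\frac43$ — which is exactly Eqs.~\eqref{eq:sicaccinfobound} and~\eqref{eq:sicinfopowerbound} of Theorem~\ref{thm:sicbound} specialized to $d=2$ — with an explicit matching lower bound. In dimension two the SIC ensemble is (up to a choice of basis) the tetrahedral ensemble $\set{E} = \{\rho_x = \tfrac14\ket{\psi_x}\bra{\psi_x}\}_{x=1}^4$, whose Bloch vectors $\vec s_x$ point to the vertices of a regular tetrahedron so that $\vec s_x\cdot\vec s_y = -\tfrac13$ for $x\neq y$, and likewise the SIC POVM is the tetrahedral POVM $\set{P} = \{\Pi_y = \tfrac12\ket{\psi_y}\bra{\psi_y}\}$; it therefore suffices to exhibit one POVM maximally informative for $\set{E}$ and one ensemble maximally informative for $\set{P}$.

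First I would introduce the \emph{antitetrahedral POVM} $\set{P}^\star = \{\Pi_y^\star = \tfrac12\ket{\phi_y}\bra{\phi_y}\}$ built from the antipodal Bloch vectors $\vec t_y = -\vec s_y$. Since $\{\vec t_y\}$ is again a tetrahedral configuration, Lemma~\ref{lmm:average} (with $d=2$, $\lambda=1$) gives $\sum_y\ket{\phi_y}\bra{\phi_y} = 2\openone$, so $\sum_y\Pi_y^\star = \openone$ and $\set{P}^\star$ is a legitimate rank-one POVM. From $|\braket{\psi_x}{\phi_y}|^2 = \tfrac12(1+\vec s_x\cdot\vec t_y) = \tfrac12(1-\vec s_x\cdot\vec s_y)$, the Born rule gives $p(X{=}x,Y{=}y) = \Tr[\rho_x\Pi_y^\star]$ equal to $0$ if $x=y$ and to $\tfrac1{12}$ if $x\neq y$; the marginals are uniform, so $H(X)=H(Y)=\log 4$ while $H(X,Y)=\log 12$, hence $I(\set{E},\set{P}^\star) = 2\log 4 - \log 12 = \log\tfrac43$. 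Combined with the upper bound, this forces $A(\set{E}) = \log\tfrac43$ and identifies $\set{P}^\star$ as maximally informative.

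The informational-power statement then follows symmetrically. Taking the \emph{antitetrahedral ensemble} $\set{E}^\star = \{\rho_x^\star = \tfrac14\ket{\phi_x}\bra{\phi_x}\}$, whose average state is $\tfrac12\openone$ by Lemma~\ref{lmm:average}, the identical overlap computation yields $I(\set{E}^\star,\set{P}) = \log\tfrac43$, which meets the upper bound of Theorem~\ref{thm:sicbound} and gives $W(\set{P}) = \log\tfrac43$ with $\set{E}^\star$ maximally informative. (One can also read off the \emph{value}, though not the optimizer, directly from Lemma~\ref{lmm:duality}: choosing $\rho=\tfrac12\openone$ turns $\set{P}$ into the SIC ensemble $\{\tfrac12\Pi_y\}=\set{E}$, so $W(\set{P})\ge A(\set{E})=\log\tfrac43$.)

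I do not anticipate a genuine obstacle here: the antipodal symmetry of the Bloch sphere hands us the optimizers, and the nontrivial inequality — the upper bound — has already been done in Theorem~\ref{thm:sicbound}. The only delicate points are (i) checking via Lemma~\ref{lmm:average} that the antitetrahedral set is normalized to a bona fide POVM (resp.\ ensemble), and (ii) running the elementary overlap-and-entropy bookkeeping without slips, in particular getting the off-diagonal joint probability $\tfrac1{12}$ and not $\tfrac16$.
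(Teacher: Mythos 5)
Your proposal is correct and follows essentially the same route as the paper: exhibit the antipodal (antitetrahedral) POVM and ensemble as the optimizers, verify that they achieve mutual information $\log\tfrac43$, and match this against the upper bounds of Theorem~\ref{thm:sicbound} for $d=2$. The only difference is that you carry out the overlap-and-entropy computation explicitly (correctly obtaining the off-diagonal joint probability $\tfrac1{12}$), whereas the paper merely asserts the saturation and transfers it between $A$ and $W$ via Lemma~\ref{lmm:duality} and the fact that the average state is $\tfrac12\openone$.
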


\begin{proof}
  The POVM $\set{P} = \{ \Pi_y \}_{y=1}^4$ (tetrahedral POVM)
  with $\Pi_y = \frac12 \ket{\pi_y}\bra{\pi_y}$ given by
  \begin{align*}
    \ket{\pi_1} &= \ket{0},\\ \ket{\pi_2} &=
    \frac{1}{\sqrt3} \ket{0} + \sqrt{\frac23}
    \ket{1},\\ \ket{\pi_3} &= \frac{1}{\sqrt3} \ket{0} +
    e^{i\frac23\pi} \sqrt{\frac23} \ket{1},\\ \ket{\pi_4} &=
    \frac{1}{\sqrt3} \ket{0} + e^{-i\frac23\pi}
    \sqrt{\frac23} \ket{1}.
  \end{align*}
  is a SIC POVM in dimension $2$.

  The ensemble $\set{E} = \{ \rho_x \}_{x=1}^4$
  (antitetrahedral ensemble) with $\rho_x = \frac14
  \ket{\psi_x}\bra{\psi_x}$ given by
  \begin{align*}
    \ket{\psi_1} &= \ket{1},\\ \ket{\psi_2} &=
    \sqrt{\frac23} \ket{0} - \frac{1}{\sqrt3}
    \ket{1},\\ \ket{\psi_3} &= \sqrt{\frac23} \ket{0} +
    e^{i\frac13\pi}\frac{1}{\sqrt3}
    \ket{1},\\ \ket{\psi_4} &= \sqrt{\frac23} \ket{0} +
    e^{-i\frac13\pi} \frac{1}{\sqrt{3}} \ket{1}.
  \end{align*}
  saturates the upper bound in
  Eq.~\eqref{eq:sicinfopowerbound}. Since the average state
  of the ensemble $\set{E}$ is $\frac1d \openone$, due to
  Lemmas~\ref{lmm:duality} and~\ref{lmm:average} also the
  upper bound in Eq.~\eqref{eq:sicaccinfobound} is
  saturated.
\end{proof}

\begin{cor}
  \label{thm:sicqutrit}
  Upper bounds in Eqs.~\eqref{eq:sicaccinfobound} and
  \eqref{eq:sicinfopowerbound} are tight in dimension $d=3$,
  namely there exist a $3$-dimensional SIC ensemble
  $\set{E}$ and a $3$-dimensional SIC POVM $\set{P}$ such that
  \begin{align*}
    A(\set{E}) = W(\set{P}) = \log\frac32.
  \end{align*}
  The accessible information is achieved by an orthonormal
  POVM and the informational power by an orthonormal
  ensemble.
\end{cor}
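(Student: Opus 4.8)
The plan is to mirror the qubit argument of Corollary~\ref{thm:sicqubit}: exhibit one explicit SIC POVM $\set{P}=\{\Pi_y\}$ in dimension three, take $\set{E}$ to be the SIC ensemble $\{\tfrac13\Pi_y\}$ associated with it by the one-to-one correspondence recalled just before Theorem~\ref{thm:sicbound}, and produce one orthonormal basis $\{\ket{k}\}_{k=0}^{2}$ of $\hilb{H}$ for which the projective POVM $\set{P}':=\{\ket{k}\bra{k}\}$ attains the upper bound in Eq.~\eqref{eq:sicaccinfobound} for $\set{E}$, while the uniform ensemble $\set{E}':=\{\tfrac13\ket{k}\bra{k}\}$ attains the upper bound in Eq.~\eqref{eq:sicinfopowerbound} for $\set{P}$. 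In dimension three, unlike in dimension two, there is a continuum of unitarily inequivalent SICs, so the choice matters; I would take $\set{P}$ to be the Hesse SIC, namely the Heisenberg--Weyl orbit $\Pi_{(a,b)}=\tfrac13 X^{a}Z^{b}\ket{\psi_0}\bra{\psi_0}Z^{-b}X^{-a}$, $(a,b)\in\{0,1,2\}^2$, of the fiducial $\ket{\psi_0}=\tfrac1{\sqrt2}(\ket{1}-\ket{2})$, with $X$ the cyclic shift and $Z$ the phase operator on $\hilb{H}$ and $\{\ket{k}\}$ the corresponding computational basis. A short check that $\Tr[\Pi_y]=\tfrac13$ and $\Tr[\Pi_y\Pi_{y'}]=\tfrac1{36}$ for $y\ne y'$ confirms that $\set{P}$ is a genuine SIC POVM, and Lemma~\ref{lmm:average} then gives $\sum_y\Pi_y=\openone$.

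The one step requiring real attention is the elementary observation that, for these choices, each basis vector $\ket{k}$ is orthogonal to exactly three of the nine SIC vectors and satisfies $|\braket{k}{\psi_{(a,b)}}|^2=\tfrac12$ for the remaining six. This is transparent from the explicit form of the orbit: every vector $X^{a}Z^{b}\ket{\psi_0}$ has exactly one vanishing computational component, namely the $a$-th (because $Z^{b}$ is diagonal and $X^{a}$ merely permutes the components cyclically), and its two nonvanishing components have modulus $\tfrac1{\sqrt2}$; hence $|\braket{k}{\psi_{(a,b)}}|^2$ equals $0$ when $a=k$ and $\tfrac12$ otherwise. This overlap pattern is precisely the equality case of Rastegin's bound~\cite{Ras13} invoked in the proof of Theorem~\ref{thm:sicbound}: for every input label the conditional output distribution is flat on $d(d+1)/2=6$ outcomes and zero on the other three, so $H(Y|X{=}x)=\log 6$ holds with equality for each $x$ in the support (equivalently, the index of coincidence $\sum_y(\Tr[\rho_x\Pi_y]/\Tr[\rho_x])^2$ equals $\tfrac2{d(d+1)}$, so Shannon entropy meets collision entropy).

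Everything downstream is mechanical. For $\set{E}'$ and $\set{P}$ the input distribution is uniform on three labels and $p(Y{=}y)=\tfrac19\sum_k|\braket{k}{\psi_y}|^2=\tfrac19$ by completeness of $\{\ket{k}\}$, so $H(Y)=\log 9$; combined with $H(Y|X)=\log 6$ this gives $I(\set{E}',\set{P})=\log 9-\log 6=\log\tfrac32$, and comparison with the upper bound in Eq.~\eqref{eq:sicinfopowerbound} forces $W(\set{P})=\log\tfrac32$, attained by the orthonormal ensemble $\set{E}'$. Symmetrically, for the SIC ensemble $\set{E}=\{\tfrac19\ket{\psi_y}\bra{\psi_y}\}$ (uniform over nine labels, so its entropy is $\log 9$) and the orthonormal POVM $\set{P}'$, one has $p(Z{=}k)=\tfrac19\sum_y|\braket{k}{\psi_y}|^2=\tfrac13$ by Lemma~\ref{lmm:average} and the conditional entropy on each outcome equal to $\log 6$ by the same overlap pattern, so the mutual information is again $\log 9-\log 6=\log\tfrac32$; comparison with the upper bound in Eq.~\eqref{eq:sicaccinfobound} forces $A(\set{E})=\log\tfrac32$, attained by the orthonormal POVM $\set{P}'$. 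The main obstacle is thus only the first, combinatorial point — finding a SIC and an orthonormal basis whose mutual overlaps realize Rastegin's equality case; the Hesse SIC paired with the eigenbasis of $Z$ does exactly this, and once that pattern is in hand both mutual informations are pinned to $\log\tfrac32$. It is worth noting that, in contrast to the qubit case, here the maximally informative objects are plain orthonormal bases rather than (anti-)symmetric configurations.
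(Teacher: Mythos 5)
Your proposal is correct and follows essentially the same route as the paper: both exhibit an explicit three-dimensional SIC (yours is the Hesse SIC written as a Weyl--Heisenberg orbit, the paper's is the same configuration in a rotated frame) together with an orthonormal basis such that each basis vector is orthogonal to exactly three of the nine SIC vectors and has squared overlap $\tfrac12$ with the remaining six, whence $I=\log 9-\log 6=\log\tfrac32$ meets the upper bounds of Theorem~\ref{thm:sicbound}. The only cosmetic difference is that you verify the accessible-information and informational-power claims by two direct computations, whereas the paper computes one and transfers it to the other via Lemmas~\ref{lmm:duality} and~\ref{lmm:average}.
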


\begin{proof}
  The POVM $\set{P} = \{ \Pi_y \}_{y=1}^9$ with $\Pi_y = \frac13
  \ket{\pi_y}\bra{\pi_y}$ given by
  \begin{align*}
    \ket{\pi_1} &= \ket{0},\\ \ket{\pi_2} &= \frac12 \ket{0}
    + i \frac{\sqrt3}2 \ket{1},\\ \ket{\pi_3} &= \frac12
    \ket{0} - i \frac{\sqrt3}2 \ket{1},
  \end{align*}
  \begin{align*}
    \ket{\pi_4} &= \frac12 \ket{0} + \frac12 \ket{1} +
    \frac1{\sqrt{2}}\ket{2},\\ \ket{\pi_5} &= \frac12
    \ket{0} + \frac12 \ket{1} + e^{\frac23 \pi i}
    \frac{1}{\sqrt{2}}\ket{2},\\ \ket{\pi_6} &= \frac12
    \ket{0} + \frac12 \ket{1} + e^{-\frac23 \pi i}
    \frac{1}{\sqrt{2}}\ket{2},
  \end{align*}
  \begin{align*}
    \ket{\pi_7} &= \frac12 \ket{0} - \frac12 \ket{1} +
    \frac1{\sqrt{2}}\ket{2},\\ \ket{\pi_8} &= \frac12
    \ket{0} - \frac12 \ket{1} + e^{\frac23 \pi i}
    \frac{1}{\sqrt{2}}\ket{2},\\\ket{\pi_9} &= \frac12
    \ket{0} - \frac12 \ket{1} + e^{-\frac23 \pi i}
    \frac{1}{\sqrt{2}}\ket{2},
  \end{align*}
  is a SIC POVM in dimension $3$. Notice that states
  $\ket{\pi_i}$ with $i=1,2,3$ lie on a plane on the Bloch
  sphere, and the same for $i=4,5,6$ and $i=7,8,9$.

  The ensemble $\set{E} = \{ \rho_x \}_{x=1}^3$ with $\rho_x
  = \frac13 \ket{\psi_x}\bra{\psi_x}$ given by
  \begin{align*}
    \ket{\psi_1} &= \ket{2},\\ \ket{\psi_2} &= -
    \frac1{\sqrt{2}} \ket{0} + \frac1{\sqrt{2}}
    \ket{1},\\ \ket{\psi_3} &= \frac1{\sqrt{2}} \ket{0} +
    \frac1{\sqrt{2}} \ket{1},
  \end{align*}
  saturates the upper bound in
  Eq.~\eqref{eq:sicinfopowerbound}.  Notice that
  $\Tr[\rho_1\Pi_y] = 0$ if $y=1,2,3$ and $\Tr[\rho_1\Pi_y]
  = 1/18$ otherwise; $\Tr[\rho_2\Pi_y] = 0$ if $y=4,5,6$ and
  $\Tr[\rho_2\Pi_y] = 1/18$ otherwise; $\Tr[\rho_3\Pi_y] =
  0$ if $y=7,8,9$ and $\Tr[\rho_3\Pi_y] = 1/18$
  otherwise. Notice also that $\set{E}$ is an orthonormal
  ensemble. Since the average state of the ensemble
  $\set{E}$ is $\frac1d \openone$, due to
  Lemmas~\ref{lmm:duality} and~\ref{lmm:average} also the
  upper bound in Eq.~\eqref{eq:sicaccinfobound} is
  saturated.
\end{proof}

Notice how the optimal construction in the case of
Corollary~\ref{thm:sicqutrit} is given by orthonormal
ensemble and POVM, contrarily to the case of
Corollary~\ref{thm:sicqubit}, in which the optimal
construction still involves the tetrahedral
configuration. An analogous observation was very recently
reported also by Szymusiak in Ref.~\cite{Szy14}, where a
characterization of the maximally informative ensembles for
group-covariant three-dimensional SIC POVMs is provided.

We can now compare the bounds provided in
Theorem~\ref{thm:sicbound} with those provided in
Theorem~\ref{thm:povmbound}. A plot of these bounds as a
function of the dimension is provided in
Fig.~\ref{fig:bound}.
\begin{figure}[htb]
  \includegraphics[width=\columnwidth]{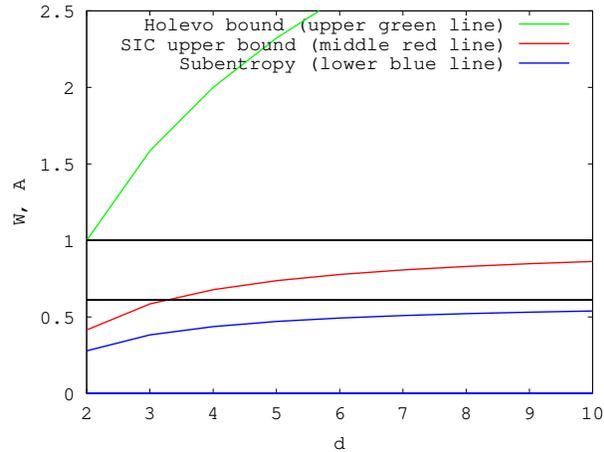}
  \caption{(Color online) Bounds on the accessible
    information and the informational power as given by
    Theorems~\ref{thm:povmbound} and \ref{thm:sicbound}. The
    upper green line is the upper bound for arbitrary
    ensembles and POVMs, namely $\log d$. The middle red
    line is the upper bound for SIC ensembles and SIC POVMs,
    namely $\log\frac{2d}{d+1}$. The lower blue line is the
    lower bound for rank-one POVMs, namely $\log(d) -
    \frac1{\ln(2)} \sum_{n=2}^d \frac1n$. The two horizontal
    black lines are the asymptotes for the latter two
    quantities.}
  \label{fig:bound}
\end{figure}
The lower bounds in Eqs.~\eqref{eq:accinfobound} and
\eqref{eq:infopowerbound}, as well as the upper bounds in
Eqs.~\eqref{eq:sicaccinfobound} and
\eqref{eq:sicinfopowerbound}, are asymptotically finite,
namely
\begin{align*}
  \log{d} - \frac1{\ln2} \sum_{i=2}^d \frac1d &
  \longrightarrow \frac{1 - \gamma}{\ln2} \sim
  0.60995,
\end{align*}
where $\gamma$ is the Euler constant, and
\begin{align*}
 \log{\frac{2d}{d+1}} & \longrightarrow 1.
\end{align*}
Then, while there exist ensembles (resp., POVMs) whose
accessible information (resp., informational power) grows as
$\log d$, that of SIC ensembles (resp., SIC POVMs) is
asymptotically upper bounded by one bit.

\section{Conclusion and outlook}\label{sec:conclusion}

The purpose of this work was twofold. First, we provided
tight upper and lower bounds, depending only on the
dimension of the system, on the accessible information and
the informational power of any preparation and
measurement. While both upper and lower bounds for
accessible information can be trivially saturated, we found
that, for the case of informational power, the lower bound
is saturated by the Scrooge measurement. Second, we focused
on the case of SIC preparations and measurements, providing
stronger upper and lower bounds on accessible information
and informational power, and proved their tightness in the
case of qubits and qutrits. As a corollary of our upper
bound, we showed that the accessible information and the
informational power of SIC preparations and measurements are
always upper bounded by one bit. As a corollary of our lower
bound, on the other hand, we provided a strategy for the
extraction of mutual information from SIC ensembles or by
SIC measurements, beating the ``pretty-good'' one.

Our results can be generalized to arbitrary (in general, not
discrete nor continuous) ensembles and POVMs, upon
redefining the mutual information in terms of relative
entropy~\cite{Cov06}. In the most general case, a
POVM~\cite{Hol73} is defined as an operator valued function
$\set{P} = \{ \Pi(\Delta) \}$ from the collection of Borel
subsets on the real line such that $\Pi(\emptyset) = 0$,
$\Pi(\mathbb{R}) = \openone$, and for every sequence of
disjoint Borel subsets $\Delta_y$ one has that $\Pi(\cup_y
\Delta_y) = \sum_y \Pi(\Delta_y)$. It should be noted that
the sets of discrete and that of continuous POVMs are dense
subsets of the set of general POVMs in a suitable weak
topology, so that the present definition of accessible
information is consistent with that for the general
case. The most general definition of ensemble is
analogous. The discussion of the most general case is out of
the scope of the present manuscript.

We notice that no non-trivial upper and lower bounds on the
informational power as functions of the POVM are known
(while analogous bounds for the accessible information as
function of the ensemble are provided by the
Holevo-Yuen-Ozawa bound~\cite{Hol73, YO93} and the
subentropy bound~\cite{JRW94}). For any POVM $\set{P} = \{
\Pi_x \}$, the subentropy of the ensemble $\set{E} = \{
\rho^{1/2} \Pi_x \rho^{1/2}\}$, for any normalized state
$\rho$, is a lower bound on the informational power of
$\set{P}$, as it follows immediately from
Lemma~\ref{lmm:duality} (see also Refs.~\cite{DDS11,
  OCMB11}). Unfortunately, a general method to maximize over
$\rho$, thus making such bound tighter, is not known.

We conclude our work by adding a few items to the long list
of conjectures related to SIC ensembles and SIC POVMs. We
start by noticing that in dimension $d \ge 4$ it is not
clear whether the upper bounds in Theorem~\ref{thm:sicbound}
are tight or not. Nonetheless, some light can be shed by
numerical optimization. Indeed, from the proof of
Theorem~\ref{thm:sicbound} it follows that necessary
condition for the upper bound in
Eq.~\eqref{eq:sicinfopowerbound} to be tight is that the
bound $H(Y|X{=}x) \ge \log(d(d+1)/2)$ is tight. Then, for
the $16$ $Z_4 \times Z_4$ covariant $4$-dimensional SIC
POVMs listed in Ref.~\cite{RBSC04}, we numerically searched
for the state $\rho_x$ minimizing the entropy
$H(Y|X{=}x)$. We employed a steepest descent algorithm
similar to that proposed in Ref.~\cite{DDS11} for several
($> 1000$) seed states chosen with uniform distribution. For
any considered POVMs, the optimal state we found has
$H(Y|X=x) \sim 3.433$, so the bound $H(Y|X=x) \ge \log10
\sim 3.322$ is not saturated. If a state saturating such
bound does not exist, then the upper bound in
Eq.~\eqref{eq:sicaccinfobound} is not tight for $Z_4 \times
Z_4$ $4$-dimensional SIC POVM.

This rises another intriguing question: is the informational
power the same for any SIC POVM in fixed dimension? And
also: for any SIC POVM in arbitrary dimension, does there
exist a maximally informative ensemble with average state
$d^{-1} \openone$? Of course, the answer to this question
would be affirmative if all SIC POVMs were proved to be
covariant under the action of some discrete group. Moreover,
if true, due to Lemmas~\ref{eq:duality}
and~\ref{lmm:average}, this would also imply an affirmative
answer to the following question: is the informational power
of any SIC POVM $\set{P} = \{ \Pi_x \}$ equal to the
accessible information of the corresponding SIC ensemble
$\set{E} = \{ d^{-1} \Pi_x \}$?  We believe that these
tantalizing open problems well deserve future investigation.

\section*{Acknowledgments}

The authors are grateful to Giacomo Mauro D'Ariano, Paolo
Perinotti, and Massimiliano F. Sacchi for very useful
discussions, comments, and suggestions. This work was
supported by JSPS (Japan Society for the Promotion of
Science) Grant-in-Aid for JSPS Fellows No. 24-0219. M.O. and
F.B. were supported by the JSPS KAKENHI,
No. 26247016. F. B. was supported by the Program for
Improvement of Research Environment for Young Researchers
from Special Coordination Funds for Promoting Science and
Technology (SCF) commissioned by the Ministry of Education,
Culture, Sports, Science and Technology (MEXT) of Japan.

\end{document}